\newcommand{\Q}{\mathbb{Q}}
\newcommand{\R}{\mathbb{R}}
\newcommand{\N}{\mathbb{N}}
\newcommand{\comment}[1]{}
\renewcommand{\section}{\@startsection%
{section}
{1}
{0mm}
{1.5\bigskipamount}
{0.5\bigskipamount}
{\centering\normalsize\sc}}
\renewcommand{\paragraph}{\@startsection%
{paragraph}
{4}
{0mm}
{\bigskipamount}
{-1.25ex}
{\normalsize\sl}}
\def\provedboxcontents#1{$\square$}
\newtheoremstyle{thm}{}{}{\slshape}{}{\scshape}{.}{0.5em}{}
\newtheoremstyle{def}{}{}{}{}{\scshape}{.}{0.5em}{}
\newtheoremstyle{rmk}{}{}{}{}{\scshape}{.}{0.5em}{}
\newtheoremstyle{claim}{}{}{}{}{\slshape}{.}{0.5em}{}
\theoremstyle{thm}
\newtheorem{newstatement}{newstatement}
\newtheorem{lemma}[newstatement]{Lemma}
\newtheorem{theorem}[newstatement]{Theorem}
\newtheorem*{conjecture*}{Conjecture}
\theoremstyle{def}
\theoremstyle{rmk}
\newtheorem{remark}[newstatement]{Remark}
\theoremstyle{claim}
\let\expandafter\oldproof\csname\string\proof\endcsname
\let\oldendproof\endproof
\renewenvironment{proof}[1][\proofname]{%
  \oldproof[\slshape #1]%
}{\oldendproof}
\let\geq\geqslant
\let\leq\leqslant
\let\phi\varphi
\let\epsilon\varepsilon
\renewcommand{\emph}[1]{{\slshape #1}}
\title{Endowments, Patience Types, and Uniqueness in Two-Good HARA Utility Economies}
\author{Andrea Loi}
\address{Andrea Loi, Dipartimento di Matematica e Informatica \\
         Universit\`a di Cagliari, Italy.}
         \email{loi@unica.it}
\author{Stefano Matta}
\address{Stefano Matta, Dipartimento di Scienze economiche e Aziendali \\
         Universit\`a di Cagliari, Italy.}
         \email{smatta@unica.it}
\date{}
\thanks{The first author was  supported  by INdAM. GNSAGA - Gruppo Nazionale per le Strutture Algebriche, Geometriche e le loro Applicazioni and
by KASBA, funded by Regione Autonoma della Sardegna.
Both authors were supported 
by STAGE, funded by Fondazione di Sardegna.}
\begin{document}
\begin{abstract}
This paper establishes a link between endowments, patience types, and the parameters of the HARA Bernoulli utility function that ensure equilibrium uniqueness in an economy with two goods and two impatience types with additive separable preferences. 
We provide sufficient conditions that guarantee uniqueness of equilibrium for any possible value of $\gamma$
in the HARA  utility function $\frac{\gamma}{1-\gamma}\left(b+\frac{a}{\gamma}x\right)^{1-\gamma}$.
The analysis contributes to the literature on uniqueness in pure exchange economies with two-goods and
two agent types and extends the result in \cite{lm22}.
\end{abstract}
\maketitle

\vspace{0.3in}

\noindent\textbf{Keywords:} Uniqueness; Excess demand function; Additive separable preferences;
HARA utility; Polynomial approximation.

\vspace{0.3in}

\noindent\textbf{JEL Classification:} C62, D51, D58.  

\vspace{0.3in}

\section{Introduction}

The relationship between  risk aversion, the number of consumer types $I$ and the uniqueness of the equilibrium price has been analysed in a recent article \cite{lm22}.
 More precisely, in
an economy with two goods and an arbitrary number $I$ of impatience types, where each type has additive separable preferences with a HARA Bernoulli utility function $u_H(x):=\frac{\gamma}{1-\gamma}\left(b+\frac{a}{\gamma}x\right)^{1-\gamma}$, it has been shown that the equilibrium is unique
if the parameter $\gamma$ is in the range $\left(1,\frac{I}{I-1}\right]$.

While it is well known \citep{helo,msc91,msc95} the effect on uniqueness when $\gamma$ takes a value between $0$ and $1$ (in the case $\gamma=1$ the function becomes logarithmic), it is not known what conditions ensure uniqueness when $\gamma$ is greater than $2$.
In this perspective, \cite{lm22} analysed only the particular case when $\gamma=3$ and $I=2$, and found sufficient conditions that ensure uniqueness. 

It is a natural question to ask whether a similar result can be found outside the above interval. 
More specifically, this would mean finding sufficient conditions that guarantee uniqueness of the equilibrium for any value of the $\gamma$ parameter. This is related to \cite{gewa}'s remarks on the difficulty of finding a sufficient condition, expressible in closed form, that would allow, for DARA-type utilities, the introduction of more heterogeneity among agents in order to overcome the restrictive assumption of identical endowments to ensure uniqueness (see \cite[Proposition 2]{gewa}).

This paper provides a positive answer to the question above for HARA utilities, an important subclass of the DARA type. More precisely,
our main result, Theorem \ref{mainteor}, shows the connection between endowments, patience types and the parameters of the HARA utility function  that ensure the uniqueness of the equilibrium.
To obtain this result, we will follow the approach of \cite{lm22}, where
the excess demand function is approximated by a polynomial whose variable, the price, is raised to a power
dependent on $\gamma$.  An algebraic result, Lemma \ref{divpol}, which links the existence of a double root of a polynomial to an inequality involving its coefficients, allows us to use a topological argument
to prove our main result.

For an overview of the literature on uniqueness, in addition to the well-known contributions by 
 \cite{ke98} and \cite{msc91}, we also refer the reader to the two recent contributions by \cite{gewa} and \cite{towa}
 for two-good, two-agent pure exchange economies.

This short note is organised as follows. Section \ref{prel} analyses the economic setting using the polynomial approach. Section \ref{mainsec} proves our main result.

\section{Preliminaries}\label{prel}

Consider an economy with two goods and $I=2$ impatience types,
where  type  $i$ has preferences represented by the utility function
\begin{equation}\label{addsep}
u_i(x,y)=u_H(x)+\beta_iu_H(y),
\end{equation}
where   $u_H$ is HARA, i.e.
\begin{equation}\label{uH}
u_H(x):=\frac{\gamma}{1-\gamma}\left(b+\frac{a}{\gamma}x\right)^{1-\gamma},\ \gamma>0, \gamma \neq 1, a>0,  b\geq 0.
\end{equation}

Let $\epsilon$ be a rational number $\frac{m}{n}$, $m,n\in\N$ sufficiently close to $\frac{1}{\gamma}$.
Suppose that $\gamma>2$ and, hence, $n>2m$.
Denoting by $(e_i,f_i)$ consumer i's endowments,  the standard maximisation problem over the budget constraint $px_i+y_i\leq pe_i+f_i$ gives (see \cite[formula (14)]{lm22})
the aggregate excess demand function for good $x$:

\begin{equation}\label{z1}
\sum_{i=1}^2\frac{b-bp^\epsilon\sigma_i+a\epsilon\left(pe_i+f_i\right)}{a\epsilon\left(p+\sigma_i p^\epsilon \right)}-(e_1+e_2),
\end{equation}
where
$$\epsilon\approx \frac{1}{\gamma}, \ \sigma_i:=\beta_i^{\epsilon}, \ i=1,2.$$

Following \cite{lm22}, we combine terms over a common denominator and take the numerator,
then we collect terms in $p$, divide by $p^{\epsilon}$, and we get:
\begin{equation*}
\begin{split}
&p (-a e_1 \sigma_1 \epsilon -a e_2 \sigma_2 \epsilon -b \sigma_1-b \sigma_2)+p^{1-\epsilon } (a f_1 \epsilon +a f_2 \epsilon +2 b)+
\\
& p^{\epsilon } (-a e_1 \sigma_1 \sigma_2 \epsilon -a e_2 \sigma_1 \sigma_2 \epsilon -2 b \sigma_1 \sigma_2)+a f_1 \sigma_2 \epsilon +a f_2 \sigma_1 \epsilon +b \sigma_1+b \sigma_2
\end{split}
\end{equation*}

\noindent Recalling that $\epsilon=\frac{m}{n}$ and by letting, with a slight abuse of notation, $x:=p^{1/n}$, we rewrite the previous expression in decreasing order as follows:
\begin{equation}\label{pol_can}
A(e, \sigma, a, b)x^n + B(f, \sigma, a, b)x^{n-m} + C(e, \sigma, a, b)x^m + D(f, \sigma, a, b),
\end{equation}
where

\begin{equation}\label{ABCD}
\begin{alignedat}{3}
A(e, \sigma, a, b):=& -(e_1\sigma_1+e_2\sigma_2)-\frac{b}{a\epsilon}(\sigma_1+\sigma_2)<0,\\
B(f, \sigma, a, b):=& (f_1+f_2)+\frac{2b}{a\epsilon}>0,\\ 
C(e, \sigma, a, b):=& (e_1+e_2)\sigma_1\sigma_2-\frac{2b}{a\epsilon}\sigma_1\sigma_2<0, \\ 
D(f, \sigma, a, b):=& (f_1\sigma_2+f_2\sigma_1)+\frac{b}{a\epsilon}(\sigma_1+\sigma_2)>0.
\end{alignedat}
\end{equation}

\begin{lemma}\label{lemma_abcd}
If the following conditions hold
\begin{equation}\label{c1}
\beta_1<\beta_2,\, e_1\leq e_2,\, f_1\geq f_2,
\end{equation}
\begin{equation}\label{c2}
b\geq \frac{a}{\gamma}\left(\frac{\beta_2}{\beta_1}\right)^\frac{2}{\gamma}(e_2+f_1),
\end{equation}
then the polynomial \eqref{pol_can} satisfies the inequality
\begin{equation}\label{ineq}
A(e, \sigma, a, b)D(f, \sigma, a, b)-B(f, \sigma, a, b)C(e, \sigma, a, b)<0
\end{equation}
\end{lemma}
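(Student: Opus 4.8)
The plan is to regard the left-hand side of \eqref{ineq} as a quadratic polynomial in the single parameter $b$, exploiting that each of $A,B,C,D$ is affine in $b$. Setting $t:=\frac{b}{a\epsilon}\ge 0$, the coefficients \eqref{ABCD} read $A=-(e_1\sigma_1+e_2\sigma_2)-t(\sigma_1+\sigma_2)$, $B=(f_1+f_2)+2t$, $C=\sigma_1\sigma_2\bigl((e_1+e_2)-2t\bigr)$ and $D=(f_1\sigma_2+f_2\sigma_1)+t(\sigma_1+\sigma_2)$. First I would expand the product and collect powers of $t$; a direct computation gives
\begin{equation*}
AD-BC=-\bigl[(\sigma_1-\sigma_2)^2\,t^2+\mu\,t+\nu\bigr],
\end{equation*}
where $\nu:=(e_1\sigma_1+e_2\sigma_2)(f_1\sigma_2+f_2\sigma_1)+\sigma_1\sigma_2(e_1+e_2)(f_1+f_2)\ge 0$ and $\mu$ is the linear coefficient written out below. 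The coefficient of $t^2$ is $-(\sigma_1-\sigma_2)^2<0$, strictly negative because $\beta_1<\beta_2$ and $\epsilon>0$ force $\sigma_1=\beta_1^\epsilon\neq\beta_2^\epsilon=\sigma_2$. Hence it suffices to show that the bracketed quadratic in $t$ is strictly positive.

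The heart of the argument is to prove that the linear coefficient is nonnegative. Carrying out the collection of terms yields
\begin{equation*}
\mu=\sigma_1^2(e_1+f_2)+\sigma_2^2(e_2+f_1)+\sigma_1\sigma_2\bigl(3(e_1+e_2)-(f_1+f_2)\bigr).
\end{equation*}
I would then split $\mu$ into an $e$-part and an $f$-part. The $e$-part, namely $\sigma_1^2 e_1+\sigma_2^2 e_2+3\sigma_1\sigma_2(e_1+e_2)$, is manifestly nonnegative. The $f$-part is $\sigma_1^2 f_2+\sigma_2^2 f_1-\sigma_1\sigma_2(f_1+f_2)$, and the key observation is that it factors as
\begin{equation*}
\sigma_1^2 f_2+\sigma_2^2 f_1-\sigma_1\sigma_2(f_1+f_2)=(\sigma_2-\sigma_1)(f_1\sigma_2-f_2\sigma_1).
\end{equation*}
Both factors are nonnegative: $\sigma_2-\sigma_1>0$ as noted, while $f_1\sigma_2-f_2\sigma_1\ge f_2(\sigma_2-\sigma_1)\ge 0$ because $f_1\ge f_2\ge 0$ and $\sigma_2>\sigma_1$. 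This is precisely where the ordering hypotheses in \eqref{c1} enter. I expect spotting this factorization to be the main obstacle: without it one is tempted to dominate the (possibly small) linear term by brute force using the lower bound \eqref{c2}, which is considerably messier and obscures why the sign works out.

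It remains to assemble the pieces. Since $(\sigma_1-\sigma_2)^2>0$, $\mu\ge 0$, $\nu\ge 0$ and $t\ge 0$, the bracket $(\sigma_1-\sigma_2)^2 t^2+\mu t+\nu$ is nonnegative, and it is strictly positive as soon as $t>0$ (equivalently $b>0$), since then the leading term $(\sigma_1-\sigma_2)^2 t^2$ alone is positive. Condition \eqref{c2} guarantees exactly this, its right-hand side being positive; this is the first role of the second hypothesis beyond the ordering in \eqref{c1}. Therefore $AD-BC<0$, which is \eqref{ineq}. Finally I would record, for completeness, that \eqref{c2} together with $e_1\le e_2$ also secures the sign $C<0$ listed in \eqref{ABCD}: up to the approximation $\epsilon\approx\frac{1}{\gamma}$ it forces $t\ge(\beta_2/\beta_1)^{2/\gamma}(e_2+f_1)\ge e_2\ge\frac{e_1+e_2}{2}$, so that $(e_1+e_2)-2t\le 0$; the inequality \eqref{ineq} itself, however, rests essentially on \eqref{c1} through the nonnegativity of $\mu$.
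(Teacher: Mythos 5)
Your proof is correct for the statement as printed, and it takes a genuinely different route from the paper's. The paper splits $AD-BC$ as $(\sigma_2-\sigma_1)(e_1f_2\sigma_1-e_2f_1\sigma_2)+E(e,f,\sigma,a,b)$, shows the first summand is negative using \eqref{c1} alone, and then treats $E$ (quadratic in $b$ with negative leading coefficient) separately, invoking the lower bound \eqref{c2} on $b$ to force $E\leq 0$ through a chain of estimates on the threshold value of $b$. Your organisation --- writing the whole expression as $AD-BC=-\bigl[(\sigma_1-\sigma_2)^2t^2+\mu t+\nu\bigr]$ in $t=b/(a\epsilon)$ and proving $\mu\geq 0$ via the factorisation $\sigma_1^2f_2+\sigma_2^2f_1-\sigma_1\sigma_2(f_1+f_2)=(\sigma_2-\sigma_1)(f_1\sigma_2-f_2\sigma_1)$ --- is sharper in one respect: it exposes that \eqref{c2} enters only to guarantee $t>0$ (and the sign of $C$), while the ordering hypotheses \eqref{c1} do all the structural work, whereas the paper's proof genuinely leans on the size of $b$. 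Both arguments implicitly need nondegenerate endowments ($e_2+f_1>0$ in your case) to get strict inequality.

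One caution, however. The printed coefficients \eqref{ABCD} contain a sign typo in $C$: expanding the excess-demand numerator displayed just above \eqref{pol_can} gives $C=-(e_1+e_2)\sigma_1\sigma_2-\frac{2b}{a\epsilon}\sigma_1\sigma_2$, not $+(e_1+e_2)\sigma_1\sigma_2-\frac{2b}{a\epsilon}\sigma_1\sigma_2$, and the paper's own proof works with the minus-sign version (its decomposition of $AD-BC$ is an identity only for that sign). Your expansion silently adopts the printed plus-sign $C$, so strictly speaking you proved a variant of the inequality that is not the one relevant to the demand function. Fortunately your strategy is robust to the correction: with $C=-\sigma_1\sigma_2\bigl[(e_1+e_2)+2t\bigr]$ one finds $AD-BC=-\bigl[(\sigma_1-\sigma_2)^2t^2+\mu' t+\nu'\bigr]$ with $\mu'=(\sigma_2-\sigma_1)\bigl[(e_2+f_1)\sigma_2-(e_1+f_2)\sigma_1\bigr]$ and $\nu'=(\sigma_2-\sigma_1)\bigl(e_2f_1\sigma_2-e_1f_2\sigma_1\bigr)$, both nonnegative under \eqref{c1} (since $e_1+f_2\leq e_2+f_1$, $e_1f_2\leq e_2f_1$ and $\sigma_1<\sigma_2$), and $C<0$ then holds automatically. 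So the same quadratic-in-$t$ argument goes through, again using \eqref{c2} only for strict positivity; but you should state explicitly which $C$ you are using, since your displayed identity for $AD-BC$ fails for the coefficient the paper actually derives.
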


\begin{proof}

We will follow, mutatis mutandis, the same line of reasoning of the proof of \cite[Theorem 2]{lm22}, the only difference here is that  we deal with an arbitrary value of $\gamma$. 
The formula
$A(e, \sigma, a, b)D(f, \sigma, a, b)-B(f, \sigma, a, b)C(e, \sigma, a, b)$ can be written as

$$(\sigma_2-\sigma_1)(e_1f_2\sigma_1-e_2f_1\sigma_2) +E(e,f, \sigma, a, b),$$
where 
$$E(e,f, \sigma, a, b):=-\frac{b^2}{a^2\epsilon^2}\left(\sigma_1-\sigma_2\right)^2+\frac{b}{a\epsilon}\left[\left(e_1+e_2+f_1+f_2\right)\sigma_1\sigma_2-(e_1+f_2)\sigma_1^2 -(e_2+f_1)\sigma_2^2\right].$$

Observe that, by condition \eqref{c1}, 
$$(\sigma_2-\sigma_1)(e_1f_2\sigma_1-e_2f_1\sigma_2)\leq (\sigma_2-\sigma_1)f_1(e_1\sigma_1-e_2\sigma_2)< (\sigma_2-\sigma_1)f_1\sigma_2(e_1-e_2)\leq 0.$$

Moreover, $E(e,f, \sigma, a, b)\leq 0$ if and only if 
$$b\geq a\epsilon \frac{\left[(e_1+e_2+f_1+f_2)\sigma_1\sigma_2-(e_1+f_2)\sigma_1^2 -(e_2+f_1)\sigma_2^2\right]}{\left(\sigma_1-\sigma_2\right)^2}.$$

Again, by \eqref{c1}, we can write

\begin{align*}
a\epsilon\frac{\left[(e_1+e_2+f_1+f_2)\sigma_1\sigma_2-(e_1+f_2)\sigma_1^2 -(e_2+f_1)\sigma_2^2\right]}{\left(\sigma_1-\sigma_2\right)^2}& \\
&<a\epsilon\frac{\left[(e_1+e_2+f_1+f_2)\sigma_1\sigma_2\right]}{\left(\sigma_1-\sigma_2\right)^2}\\&<
a\epsilon\left(\frac{\sigma_2}{\sigma_1}\right)^{2}(e_2+f_1).
\end{align*}

\noindent Thus, since  $\sigma_i=\beta_i^{\epsilon}$, $i=1, 2$ and $\gamma=\frac{1}{\epsilon}$, the  proof of the lemma follows.
\end{proof}

\section{Main result}\label{mainsec}

\noindent In this section we present our main result, Theorem \ref{mainteor}.
As far as uniqueness is concerned,
we will assume an arbitrary $\gamma>2$. In fact, 
the case $\gamma\in (1,2]$ is a particular case of \cite[Theorem 1]{lm22}, while the case $\gamma\leq 1$ 
is a well known result in the literature \cite{helo,msc91,msc95}.

Observe that the zero set of aggregate demand function amounts to studying the zeros of polynomial \eqref{pol_can}. In fact, according to \cite{lm22}'s approach it is possible to approximate $\gamma$ with a rational number, since $\Q$ is dense in $\R$,  in such a way that the cardinality of the set of regular equilibria does not decrease \cite[Lemma 9]{lm22}. To provide a geometric insight, this corresponds to small perturbations of the aggregate demand function that do not allow a decrease in the number of the equilibria.

\begin{theorem}\label{mainteor}

In an economy with two goods and two impatient types with Hara preferences \eqref{addsep},
if the conditions \eqref{c1} and \eqref{c2} hold, then the equilibrium price is unique.
\end{theorem}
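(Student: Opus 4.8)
The plan is to reduce the statement to a purely algebraic claim about the polynomial \eqref{pol_can} and then to show that this polynomial has exactly one positive root. After clearing the strictly positive common denominator in \eqref{z1} and dividing by $p^{\epsilon}$, the equilibrium prices $p>0$ correspond, via the substitution $x=p^{1/n}$, to the positive roots of
\[
P(x)=A\,x^{n}+B\,x^{\,n-m}+C\,x^{m}+D,
\]
with coefficients as in \eqref{ABCD} (arguments suppressed). Since $p\mapsto p^{1/n}$ is a bijection of $(0,\infty)$ onto itself, uniqueness of the equilibrium price is equivalent to uniqueness of the positive root of $P$. Existence of at least one root is immediate, as $P(0)=D>0$ while $A<0$ forces $P(x)\to-\infty$ as $x\to+\infty$; the whole content is therefore \emph{uniqueness}. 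Finally, because $\gamma$ has been replaced by a rational $\epsilon=m/n\approx 1/\gamma$ chosen so that the number of regular equilibria does not decrease (\cite[Lemma 9]{lm22}), a single positive root of $P$ forces at most one regular equilibrium of the original economy, which with existence yields exactly one.

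To treat uniqueness I would group the monomials of $P$ as
\[
P(x)=x^{\,n-m}\bigl(A\,x^{m}+B\bigr)+\bigl(C\,x^{m}+D\bigr).
\]
The sign sequence of the coefficients is $(-,+,-,+)$ by \eqref{ABCD}, so Descartes' rule of signs already limits the number of positive roots to one or three, and the task is to exclude the second possibility. Now each factor $A x^{m}+B$ and $C x^{m}+D$ is affine and strictly decreasing in $x^{m}$ (since $A,C<0$) and positive at $x=0$, hence each has a unique positive root; call them $x_{g}=(B/|A|)^{1/m}$ and $x_{h}=(D/|C|)^{1/m}$. The inequality \eqref{ineq} of Lemma \ref{lemma_abcd}, rewritten as $|A|\,D>B\,|C|$, is exactly the assertion $x_{h}>x_{g}$. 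With this ordering one localises every positive root: on $(0,x_{g}]$ both grouped factors are nonnegative and not simultaneously zero, so $P>0$; on $[x_{h},\infty)$ both are nonpositive and not simultaneously zero, so $P<0$. Hence any positive root lies in the open interval $(x_{g},x_{h})$.

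On $(x_{g},x_{h})$ I would recast $P(x)=0$ as the balance
\[
L(x):=|A|\,x^{n}-B\,x^{\,n-m}\;=\;D-|C|\,x^{m}=:R(x),
\]
where both sides are strictly positive throughout the interval. A direct differentiation gives $L'(x)=x^{\,n-m-1}\bigl(|A|\,n\,x^{m}-B(n-m)\bigr)>0$ for $x>x_{g}$, because there $|A|x^{m}>B$; thus $L$ is strictly increasing with $L(x_{g})=0$, while $R$ is strictly decreasing with $R(x_{g})>0$ (this positivity is once more \eqref{ineq}) and $R(x_{h})=0$. Consequently $L-R$ is strictly increasing on $[x_{g},x_{h}]$, negative at $x_{g}$ and positive at $x_{h}$, so it vanishes at exactly one point, the unique positive root of $P$.

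This monotonicity/transversality step is the crux, and the place most likely to need care: one must check that both monotonicities persist on the \emph{entire} interval $(x_{g},x_{h})$, and that the single strict inequality \eqref{ineq} is what simultaneously confines the root to $(x_{g},x_{h})$, makes $R(x_{g})$ strictly positive, and renders the crossing transversal, so that no double (and hence no third) positive root can occur. Everything else — the reduction from \eqref{z1} to $P$, existence, and the sign bookkeeping of \eqref{ABCD} — is routine.
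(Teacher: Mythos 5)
Your proof is correct, but it takes a genuinely different route from the paper's. The paper argues by contradiction: after invoking Lemma \ref{lemma_abcd} to get $AD-BC<0$, it uses path-connectedness of the space of polynomials $Ax^n+Bx^{n-m}+Cx^m+D$ with nonzero coefficients satisfying $AD-BC<0$, together with continuous dependence of roots on coefficients, to reduce ``more than one positive root'' to the existence of a double positive root, and then excludes that by the long polynomial-division computation of Lemma \ref{divpol} (a double positive root forces $AD-BC\geq 0$). You instead argue directly: writing $P(x)=x^{n-m}(Ax^m+B)+(Cx^m+D)$, the inequality \eqref{ineq} says exactly that $x_g=(B/|A|)^{1/m}<x_h=(D/|C|)^{1/m}$, which confines every positive root to $(x_g,x_h)$; on that interval $P=R-L$ with $L(x)=|A|x^n-Bx^{n-m}$ strictly increasing (indeed $L'(x)=x^{n-m-1}\bigl(n|A|x^m-(n-m)B\bigr)>0$ once $|A|x^m\geq B$) and $R(x)=D-|C|x^m$ strictly decreasing, so $P$ is strictly monotone there and, changing sign between $x_g$ and $x_h$, vanishes exactly once. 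I have checked the sign bookkeeping in \eqref{ABCD}, the equivalence $AD-BC<0\iff x_h>x_g$, the root localisation, and the persistence of both monotonicities on the whole interval: all correct. Your appeal to Descartes' rule of signs is superfluous, since the monotone-crossing argument alone delivers uniqueness. What your route buys: it is elementary and self-contained, it gives existence, uniqueness, and simplicity (transversality) of the root in one stroke, it works for any exponents $n>m\geq 1$ rather than only $n>2m$, and it bypasses the paper's topological step, which as written is slightly delicate (one must also note that along a path inside the region $\{AD-BC<0\}$ positive roots cannot escape through $0$ or $\infty$, because $D>0$ and $A<0$ persist, before concluding that a change in the number of positive roots forces a double root somewhere in the family). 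What the paper's route buys is Lemma \ref{divpol} itself, a self-contained algebraic fact about four-term polynomials of possible independent interest, whereas your argument leaves no such reusable by-product.
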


\begin{remark}\rm
For the general type of DARA,  \cite{gewa} observe there is not a closed-form expression that ensures uniqueness, but conditions \eqref{c1} and \eqref{c2} represent a closed-form expression for HARA utilities, an important subclass of utilities of type DARA. They are the same as those presented in \cite{lm22}, here suitably generalised.
\end{remark}

\begin{proof}
By \cite[Theorem 1]{lm22})  there exists uniqueness of equilibrium
if and only if the polynomial \eqref{pol_can}, $P(x)$, has a unique positive root.
We will prove that the inequality \eqref{ineq}, which holds by Lemma \ref{lemma_abcd}, implies
that $P(x)$ has a unique positive root. 
Assume by contradiction that $P(x)$ has more than one positive root. 
Since  $P(x)$ belongs to the path-connected  space of polynomials $Ax^n+Bx^{n-m}+Cx^m+D$, with non zero coefficients
such that $AD-BC<0$, it follows 
by the continuous dependence of the roots of a polynomial on its coefficients that $P(x)$ has indeed  a double positive root. 
Hence one can achieve the conclusion of Theorem \ref{mainteor} by the following algebraic lemma.
\end{proof}

\begin{lemma}\label{divpol}
If the polynomial \eqref{pol_can}, $P(x)={Ax^n+Bx^{n-m}+Cx^m+D}$, $ABCD\neq 0$, has a double positive root, then $AD-BC\geq0$.
\end{lemma}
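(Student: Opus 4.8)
The plan is to translate the hypothesis that $P$ has a double positive root into two algebraic relations and then to show that $AD-BC$ factors in a way that those relations force to be nonnegative. Let $r>0$ be the double root, so that $P(r)=0$ and $P'(r)=0$. Since $r>0$, the second condition is equivalent to $rP'(r)=0$. Introducing the abbreviations
\[
u:=Ar^n,\quad v:=Br^{n-m},\quad w:=Cr^m,
\]
the two conditions read
\[
u+v+w+D=0,\qquad nu+(n-m)v+mw=0,
\]
the first being $P(r)=0$ and the second being $rP'(r)=0$.

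First I would rewrite $AD-BC$ in terms of $u,v,w$. Dividing by the positive quantity $r^n$ one has $A=u/r^n$, $B=v/r^{n-m}$, $C=w/r^m$, hence
\[
AD-BC=\frac{Du-vw}{r^n}.
\]
Substituting $D=-(u+v+w)$ from the first relation and expanding gives
\[
Du-vw=-(u+v)(u+w),
\]
so that $AD-BC=-(u+v)(u+w)/r^n$. Because $r^n>0$, it then suffices to show that $(u+v)(u+w)\le 0$.

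The final step uses the derivative relation, which I would regroup as
\[
(n-m)(u+v)+m(u+w)=0.
\]
Since $\gamma>2$ forces $n>2m$, both coefficients $n-m$ and $m$ are strictly positive, so this identity expresses $u+v$ and $u+w$ as multiples of each other with a negative ratio; equivalently $(u+v)(u+w)=-\tfrac{m}{n-m}(u+w)^2\le 0$. Combined with the formula of the previous paragraph this yields $AD-BC\ge 0$, as claimed.

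The main obstacle is one of recognition rather than computation: the argument hinges on spotting that $AD-BC$ collapses to the symmetric product $-(u+v)(u+w)/r^n$ and that the first-derivative condition is exactly the linear relation $(n-m)(u+v)+m(u+w)=0$ in the very same two quantities $u+v$ and $u+w$. Once both facts are in hand the conclusion is immediate; the delicate point is to organise the bookkeeping so that both $AD-BC$ and $rP'(r)$ are expressed through the single pair $(u+v,\,u+w)$, after which the positivity of $n-m$ and $m$ does the rest.
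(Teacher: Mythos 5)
Your proof is correct, and it takes a genuinely different route from the paper. The paper proves the lemma by explicitly long-dividing $P(x)$ by $(x-\alpha)^2$, tracking the remainders through three tables of inductively guessed patterns until a linear remainder appears, and then setting its two coefficients to zero; those two equations are in fact equivalent to your pair $P(\alpha)=0$, $P'(\alpha)=0$ (the paper's second equation is precisely $P(\alpha)+\alpha P'(\alpha)=0$), which you invoke directly via the standard derivative criterion for a multiple root, skipping the division entirely. Your substitution $u=Ar^n$, $v=Br^{n-m}$, $w=Cr^m$ then does the real work: the identity $Du-vw=-(u+v)(u+w)$ and the regrouping $rP'(r)=(n-m)(u+v)+m(u+w)=0$ express both $AD-BC$ and the tangency condition through the single pair $(u+v,\,u+w)$, yielding $AD-BC=\frac{n-m}{m}\,r^{n-2m}\left(Ar^m+B\right)^2\geq 0$. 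This is exactly the closed form the paper reaches in its penultimate display, $m\alpha^{m-1}(AD-BC)=(n-m)\alpha^{n-m-1}(\alpha^mA+B)^2$ (the paper's final displayed line, $AD-BC=\frac{n-m}{m}\alpha^n(\alpha^nA+B)^2$, contains typos in the exponents; your formula matches the correct intermediate step). Your argument buys several things over the paper's: it is far shorter and less error-prone than the pattern-matching division; it never uses the hypothesis $ABCD\neq 0$; it works for any exponents $0<m<n$ rather than requiring $n>2m$; and it shows the sharper statement that $AD-BC>0$ unless $Ar^m+B=0$. The paper's division-based proof offers no extra generality in return — it is simply a more laborious way of producing the same two defining equations.
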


\begin{proof}
By contradiction, let $\alpha>0$ be a double root of $P(x)$, that is,
 $(x-\alpha)^2$ divides $P(x)$. The following table shows the pattern of the remainders after the first $k$ steps of the division.
 
 \vspace{0.3cm}

\begin{center}
\begin{tabular}{|p{0.5in}|p{5.5in}|}\hline
{\bf Step}  & \Centering{\bf Remainder}\\
\hline
\Centering{$1$} & \Centering{$2\alpha Ax^{n-1}-\alpha^2Ax^{n-2}+Bx^{n-m}+Cx^m+D$} \\
\Centering{$2$} & \Centering{$3\alpha^2Ax^{n-2}-2\alpha^3Ax^{n-3}+Bx^{n-m}+Cx^m+D$}\\
\Centering{$3$} & \Centering{$4\alpha^3Ax^{n-3}-3\alpha^4Ax^{n-4}+Bx^{n-m}+Cx^m+D$}\\
\Centering{$\vdots$} & \Centering{$\vdots$}\\
\Centering{$k$} & \Centering{$(k+1)\alpha^kAx^{n-k}-k\alpha^{k+1}Ax^{n-k-1}+Bx^{n-m}+Cx^m+D$}\\
\hline
\end{tabular}
\end{center}
 \vspace{0.3cm}
\noindent From $n-k-1=n-m$, we get $m=k+1$ and then we can rewrite the reminder accordingly:

$$m\alpha^{m-1}Ax^{n-m+1}+ [B-(m-1)\alpha^mA]x^{n-m}+Cx^m+D.$$

\noindent Continuing the division with this new reminder, 
the next table reveals again the following pattern:

 \vspace{0.3cm}
\begin{center}
\begin{tabular}{|p{0.5in}|p{5.5in}|}\hline
\Centering{\bf Step}  & \Centering{\bf Remainder}\\
\hline
\Centering{$1$} & $[B+(m+1)\alpha^mA]x^{n-m}-m\alpha^{m+1}Ax^{n-m-1}+Cx^m+D$\\
\Centering{$2$} & $[2\alpha B+(m+2)\alpha^{m+1}A]x^{n-m-1}-[\alpha^2B+(m+1)\alpha^{m+2}A]x^{n-m-2}+Cx^m+D$\\
\Centering{$3$} & $[3\alpha^2B+(m+3)\alpha^{m+2}A]x^{n-m-2}-[2\alpha^3B+(m+2)\alpha^{m+3}A]x^{n-m-3}+Cx^m+D$\\
\Centering{$\vdots$} & \Centering{$\vdots$}\\
\Centering{$k$} & $k\alpha^{k-1}B+(m+k)\alpha^{m+k-1}A]x^{n-m-k+1}-[(k-1)\alpha^kB+(m+k-1)\alpha^{m+k}A)]x^{n-m-k}+Cx^m+D$\\
\hline
\end{tabular}
\end{center}
 \vspace{0.3cm}
\noindent From $n-m-k=m$, we get $k=n-2m$. We can rewrite the reminder as follows:

\begin{equation*}
\begin{split}
& [(n-2m)\alpha^{n-2m-1}B+(n-m)\alpha^{n-m-1}A]x^{m+1}-\\
&-[(n-2m-1)\alpha^{n-2m}B+(n-m-1)\alpha^{n-m}A]x^m+Cx^m+D,
\end{split}
\end{equation*}

\noindent that, reordering terms, becomes

\begin{equation*}
\begin{split}
& [(n-2m)\alpha^{n-2m-1}B+(n-m)\alpha^{n-m-1}A]x^{m+1}+\\ 
& +[C-(n-2m-1)\alpha^{n-2m}B-(n-m-1)\alpha^{n-m}A]x^m+D.
\end{split}
\end{equation*}
 
\noindent Starting with this new reminder, the last pattern is suggested by the following table:

 \vspace{0.3cm}
\begin{center}
\begin{tabular}{|p{0.5in}|p{5.5in}|}\hline
\Centering{\bf Step}  & \Centering{\bf Remainder}\\
\hline
\Centering{$1$} & $[C+(n-2m+1)\alpha^{n-2m}B+(n-m+1)\alpha^{n-m}A]x^m-[(n-2m)\alpha^{n-2m+1}B+
(n-m)\alpha^{n-m+1}A]x^{m-1}+D$\\
\Centering{$2$} & $[2\alpha C+(n-2m+2)]\alpha^{n-2m+1}B+(n-m+2)\alpha^{n-m+1}A]x^{m-1}-[\alpha^2C+(n-2m+1)\alpha^{n-2m+2}B+(n-m+1)\alpha^{n-m+2}A]x^{m-2}+D$\\
\Centering{$3$} & $[3\alpha^2 C+(n-2m+3)]\alpha^{n-2m+2}B+(n-m+3)\alpha^{n-m+2}A]x^{m-2}-[2\alpha^3C+(n-2m+2)\alpha^{n-2m+3}B+(n-m+2)\alpha^{n-m+3}A]x^{m-2}+D$\\
\Centering{$\vdots$} & \Centering{$\vdots$}\\
\Centering{$k$} & $[k\alpha^{k-1} C+(n-2m+k)]\alpha^{n-2m+k-1}B+(n-m+k)\alpha^{n-m+k-1}A]x^{m-k+1}-[(k-1)\alpha^kC+(n-2m+k-1)\alpha^{n-2m+k}B+(n-m+k-1)\alpha^{n-m+k}A]x^{m-k}+D$\\
\hline
\end{tabular}
\end{center}
 \vspace{0.3cm}
\noindent After $k=m$ divisions, the remainder reduces to a first degree polynomial:
$$m\alpha^{m-1}C+(n-m)\alpha^{n-m-1}B+n\alpha^{n-1}A]x-[(m-1)\alpha^mC+(n-m-1)\alpha^{n-m}B+(n-1)\alpha^nA]+D.$$

\noindent Under the hypothesis that $(x-\alpha)^2$ divides $P(x)$, the coefficients of this last remainder must vanish, that is:
\begin{equation*}
    \begin{cases}
     m\alpha^{m-1} C=-(n-m)\alpha^{n-m-1}B-n\alpha^{n-1}A\\
     D=(m-1)\alpha^mC+(n-m-1)\alpha^{n-m}B+(n-1)\alpha^nA.\\
     \end{cases}       
\end{equation*}

\noindent Multiplying the second equation by $m\alpha^{m-1}$, we get

$$m\alpha^{m-1}D=(m-1)\alpha^mm\alpha^{m-1}C+m(n-m-1)\alpha^{n-1}B+m(n-1)\alpha^{n+m-1}A,$$

\noindent where, substituting $m\alpha^{m-1} C$ with the RHS of the first equation and multiplying by $A$, we obtain

$$m\alpha^{m-1}AD=(n-2m)\alpha^{n-1}AB+(n-m)\alpha^{n+m-1}A^2.$$

\noindent Moreover, we observe that

$$m\alpha^{m-1}BC=-(n-m)\alpha^{n-m-1}B^2-n\alpha^{n-1}AB.$$

\noindent We can then write

$$m\alpha^{m-1}(AD-BC)=(n-m)\alpha^{n-m-1}(\alpha^{2m}A^2+B^2)+2(n-m)\alpha^{n-1}AB,$$

\noindent as

$$(n-m)\alpha^{n-m-1}(\alpha^{2m}A^2+B^2+2\alpha^mAB),$$

\noindent or, equivalently,

$$(n-m)\alpha^{n-m-1}(\alpha^mA+B)^2.$$

\noindent Hence, we have

$$AD-BC=\frac{n-m}{m}\alpha^n(\alpha^nA+B)^2\geq0,$$

yielding the desired contradiction.

\end{proof}

\begin{remark}\rm
It should be possible to give a more elegant proof
of Lemma \ref{divpol}  by using an approach based on 
the discriminant of a polynomial instead of a division algorithm
as in our proof. However, this alternative 
approach seems to lead to very complicated calculations that the authors were unable to handle.
\end{remark}

\end{document}